\begin{document}

\markboth{B. S. Ratanpal, B. Suthar and V. Shah}{Solution of EFEs for Anisotropic Matter with Vanishing Complexity}

\catchline{}{}{}{}{}

\title{Solution of Einstein Field Equations for Anisotropic Matter with Vanishing Complexity: Spacetime Metric Satisfying Karmarkar Condition and Conformally Flat Geometry}

\author{B. S. Ratanpal}

\address{Department of Applied Mathematics, Faculty of Technology \& Engineering, The Maharaja Sayajirao University of Baroda, Vadodara, Gujarat 390 001, India\\
bharatratanpal@gmail.com}

\author{Bhavesh Suthar\footnote{Corresponding Author}}

\address{Department of Applied Mathematics, Faculty of Technology \& Engineering, The Maharaja Sayajirao University of Baroda, Vadodara, Gujarat 390 001, India\\
bhaveshsuthar.math@gmail.com}

\author{Vishant Shah}

\address{Department of Applied Mathematics, Faculty of Technology \& Engineering, The Maharaja Sayajirao University of Baroda, Vadodara, Gujarat 390 001, India\\
vishantmsu83@gmail.com}

\maketitle

\pub{Received (Day Month Year)}{Revised (Day Month Year)}

\begin{abstract}
The solution of Einstein's field equations for static spherically symmetric spacetime metric with anisotropic internal stresses has been obtained. The matter has vanishing complexity and a spacetime metric that satisfies the Karmarkar condition and is conformally flat. It has been noted that there is only one solution that meets these three conditions. This has been shown as a proof of the theorem.

\keywords{Vanishing Complexity; Karmarkar Condition; Anisotorpy; Einstein's Field Equations.}
\end{abstract}

\ccode{PACS Nos.:}

\section{Introduction}	

In 2018, Herrera \cite{Herrera2018} introduced a novel definition of complexity for the static spherically symmetric self-gravitating fluid distribution. Andrade and Andrade \cite{Andrade2024} examined the complexity factor in gravitational decoupling with minimal geometric deformation. Sharif and Tariq \cite{Sharif2020} have investigated the solution of Einstein-Maxwell field equations for the evaluation of self-gravitating system. The static anisotropic stellar models with vanishing complexity have been studied by Contreras \textit{et. al.} \cite{Contreras2022complex} , Bogadi \textit{et. al.} \cite{Bogadi} , Rej \textit{et. al.} \cite{Rej2024} , Das \textit{et. al.} \cite{Das2024a} and Das \textit{et. al.} \cite{Das2024b}. \\

\noindent The embedding problem is of geometrical significance, which was first discussed by Schlai \cite{Schlai} . The first isometric embedding theorem was proposed by Nash \cite{Nash} . In order to embed a 4-dimensional spacetime metric in a 5-dimensional Euclidean space, Karmarkar \cite{Karmarkar1948} derived the condition that includes curvature components. This spacetime metric is said to be class-I. The models satisfying the Karmarkar condition attracted the attention of many researchers, viz., Maurya \textit{et. al.} \cite{Maurya2015} , Bhar \textit{et. al.} \cite{Bhar2016} , Maurya \textit{et. al.} \cite{Maurya2017} , Francisco \textit{et. al.} \cite{Francisco} , Ratanpal \textit{et. al.} \cite{Ratanpal2023}.\\

\noindent If all components of the Weyl tensor are zero, the spacetime metric is regarded as conformally flat. The conformally flat spacetime can be found in the work of Herrera \textit{et. al.} \cite{Herrera2014} , Manjonjo \textit{et. al.}\cite{Manjonjo2017} and Manjonjo \textit{et. al.}\cite{Manjonjo2018}.\\

\noindent In this work, we investigated a solution for a static spherically symmetric anisotropic fluid distribution with vanishing complexity and a spacetime metric that satisfies the Karmarkar condition and is conformally flat. The work is organised as follows: Section 2 contains a description of the complexity factor, the Karmarkar condition, and the conformally flat spacetime metric. Section 3 describes the solution for interior static spherically symmetric stellar structure with vanishing complexity as well as spacetime metric satisfying Karmarkar condition and is conformally flat. We concluded the work in Section 4.

\section{Einstein field equations}

For a static spherically symmetric fluid distribution, the interior spacetime metric is described as
\begin{equation}\label{metric1}
    ds^{2}=e^{\nu\left(r\right)}dt^{2}-e^{\lambda\left(r\right)}dr^{2}-r^{2}\left({d\theta}^{2}+\sin^{2}{\theta} {d\phi}^{2}\right).
\end{equation}
The energy-momentum tensor for the anisotropic fluid distribution is defined as
\begin{equation} \label{EMTensor}
    T_{ij}=\left(\rho+p\right)u_{i}u_{j}-p g_{ij} + \pi_{ij},         \hspace{1cm} u^{i}u_{i}=1,
\end{equation}
where $\rho$ and $p$ indicate the density and isotropic pressure, respectively; $u^{i}$ is the unit four velocity; and $\pi_{ij}$ is the anisotropic stress tensor, which is defined as
\begin{equation} \label{piij}
    \pi_{ij}=\sqrt{3} S \left[c_{i} c_{j} - \frac{1}{3} \left(u_{i} u_{j}-g_{ij}\right)\right],
\end{equation}
where $S=S\left(r\right)$ represents the magnitude of anisotropy and $c^{i}=\left(0,e^{-\frac{\lambda}{2}},0,0\right)$ represents the radially directed vector.\\

\noindent According to Herrera \cite{Herrera2018} , the complexity factor for static self-gravitating fluids with spherical symmetry could be one of the structural scalars that result from the orthogonal decomposition of the Riemann tensor. The complexity factor for a static configuration is defined as
\begin{equation} \label{CF4}
    Y_{TF}=8\pi \Pi-\frac{4\pi}{r^{3}} \int_{0}^{r} \left[\rho\left(\Tilde{r}\right)\right]^{'} {\Tilde{r}}^{3} d\Tilde{r},
\end{equation}
using the spacetime metric (\ref{metric1}) and $Y_{TF}$ (\ref{CF4}), we may describe complexity as
\begin{equation} \label{CF6}
    Y_{TF}=\frac{e^{-\lambda}\left[\nu^{'}\left(r \lambda^{'}-r \nu^{'}+2\right)-2 r \nu^{''} \right]}{4r}.
\end{equation}
Therefore, the vanishing complexity entails
\begin{equation}
    Y_{TF}=0,
\end{equation}
\begin{equation} \label{VCF7}
\implies
    \nu^{'}\left(r\lambda^{'}-r\nu^{'}+2\right)-2r\nu^{''}=0.
\end{equation}
This is a second-order nonlinear differential equation. Contreras and Stuchlik \cite{Contreras2022minimal} provide the following solution to the above equation given as
\begin{equation} \label{enu8}
    e^{\nu}=\left[A+B\int r e^{\frac{\lambda}{2}}dr\right]^{2}.
\end{equation}
The spacetime metric (\ref{metric1}) is generally considered to be of class-II. Karmarkar \cite{Karmarkar1948} provided a condition for spacetime metric (1) to be of class-I, which means that a 4-dimensional spacetime metric can be embedded in 5-dimensional Euclidean space if
\begin{equation} \label{Karmarkar9}
    R_{1414}=\frac{R_{1212} R_{3434}+R_{1224} R_{1334}}{R_{2323}},
\end{equation}
where $R_{2323}\neq 0$.
For spacetime metric (\ref{metric1}), the components of the Riemann curvature tensor are as follows:
\begin{eqnarray}
R_{1414} &=& -e^{\nu}\left(\frac{\nu^{''}}{2}+\frac{{\nu^{'}}^{2}}{4}-\frac{\lambda^{'}\nu^{'}}{4}\right),\\
R_{2323} &=& -e^{-\lambda} r^{2} \sin^{2}\theta \left(e^{\lambda}-1\right),\\
R_{1212} &=& \frac{r \lambda^{'}}{2},\\
R_{3434} &=& -\frac{1}{2} r \sin^{2} \theta \nu^{'} e^{\nu-\lambda}.
\end{eqnarray}
Substituting these Riemann curvature tensors in the Karmarkar condition (\ref{Karmarkar9}), we obtain
\begin{equation} \label{nu15}
    2 \nu^{''}+{\nu^{'}}^{2}=\frac{\nu^{'}\lambda^{'}e^{\lambda}}{e^{\lambda}-1}.
\end{equation}
Integrating (\ref{nu15}) the solution of the field equation is given by
\begin{equation} \label{enu16}
    e^{\nu}=\left[C+D \int \sqrt{e^{\lambda}-1} dr\right]^{2},
\end{equation}
where C and D are constants of integration.
Weyl described the Weyl tensor in n-dimensional Riemannian space. For $n=2, 3,$ the Weyl tensor is zero. Spacetime is considered conformally flat for $n\geq 4$ when all components of the Weyl tensor vanish yielding
\begin{equation} \label{conformally}
    \frac{1-e^{\lambda}}{r^{2}}-\frac{\nu^{'}\lambda^{'}}{4}-\frac{\nu^{'}-\lambda^{'}}{2r}+\frac{\nu^{''}}{2}+\frac{{\nu^{'}}^2}{4}=0,
\end{equation}
which was provided by Ponce de Leon \cite{Ponce} .

\section{Vanishing Complexity, Karmarkar Condition and Conformal Flatness}
This section provides a solution to Einstein's field equations that have vanishing complexity and a spacetime metric satisfying the Karmarkar condition as well as conformally flat.

\begin{theorem}
    (Existence and Uniqueness theorem)
    For a spherically symmetric interior spacetime metric (\ref{metric1}) with an energy-momentum tensor (\ref{EMTensor}), if the matter distribution in the interior of a stellar configuration has vanishing complexity, the spacetime metric satisfies the Karmarkar condition, then there exists a unique solution to Einstein's field equations which is conformally flat.
\end{theorem}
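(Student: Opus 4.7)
The plan is to use the two independent first integrals that the vanishing-complexity and Karmarkar conditions already provide for $e^{\nu}$, namely (\ref{enu8}) and (\ref{enu16}), and to show that requiring them to hold simultaneously pins down $e^{\lambda}$ up to one constant. After that, conformal flatness is to be verified by direct substitution into (\ref{conformally}).

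The first step is to differentiate both (\ref{enu8}) and (\ref{enu16}) once with respect to $r$. This gives
\begin{equation*}
\tfrac{1}{2}\nu' e^{\nu/2} = B r e^{\lambda/2}, \qquad \tfrac{1}{2}\nu' e^{\nu/2} = D\sqrt{e^{\lambda}-1}.
\end{equation*}
Equating these eliminates $\nu$ and yields the purely algebraic relation $B^{2} r^{2} e^{\lambda} = D^{2}(e^{\lambda}-1)$. Solving for $e^{\lambda}$ produces
\begin{equation*}
e^{\lambda} = \frac{1}{1 - k r^{2}}, \qquad k := B^{2}/D^{2}.
\end{equation*}
This is the \emph{only} $e^{\lambda}$ compatible with both structural assumptions, and it carries one essential constant $k$; this is the uniqueness part of the statement for the radial metric function.

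Next I would insert this $e^{\lambda}$ into either (\ref{enu8}) or (\ref{enu16}), evaluate the elementary integral $\int r/\sqrt{1-kr^{2}}\,dr = -k^{-1}\sqrt{1-kr^{2}}$, and check that both sides collapse (after matching $A=C$) to the common expression
\begin{equation*}
e^{\nu/2} = A - \tfrac{D^{2}}{B}\sqrt{1-k r^{2}}.
\end{equation*}
This is the familiar interior Schwarzschild form in disguise, and the fact that the two integral representations coincide for this particular $e^{\lambda}$ is the existence half of the claim.

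The last step is to verify conformal flatness by plugging these explicit $e^{\lambda}$ and $e^{\nu/2}$ into (\ref{conformally}). Using the shortcuts $(1-e^{\lambda})/r^{2} = -k\, e^{\lambda}$, $\lambda' = 2 k r e^{\lambda}$, and the expressions for $\nu'$ and $\nu''$ obtained by differentiating $e^{\nu/2}$ twice, the five terms in (\ref{conformally}) should regroup and cancel identically. I expect the main obstacle to be purely algebraic bookkeeping at this final stage: the conformal-flatness condition mixes $\nu'$ and $\nu''$ nonlinearly with $\lambda'$ and $e^{\lambda}$, so clearing the common factor $e^{\nu/2}$ from the denominators and organising the resulting polynomial in $\sqrt{1-kr^{2}}$ is the one place that demands care. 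Since (\ref{enu8}) and (\ref{enu16}) already encode the two hypotheses, the cancellation must occur; the computation merely confirms that the unique solution found is conformally flat, which establishes the theorem.
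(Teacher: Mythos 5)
Your argument is correct and lands on the same unique solution, but it gets there by a genuinely different elimination than the paper's. The paper works directly at the level of the differential equations: it substitutes the Karmarkar relation (\ref{nu15}) for $2\nu''+{\nu'}^{2}$ into the vanishing-complexity equation (\ref{eqn18}), cancels the common factor $\nu'$, and is left with the separable first-order equation $r\lambda'/(e^{\lambda}-1)=2$, which integrates to (\ref{elambda}). You instead exploit the two first integrals (\ref{enu8}) and (\ref{enu16}): differentiating each and equating $\frac{1}{2}\nu'e^{\nu/2}$ eliminates $\nu$ altogether and yields the purely algebraic relation $B^{2}r^{2}e^{\lambda}=D^{2}\left(e^{\lambda}-1\right)$, hence $e^{\lambda}=(1-kr^{2})^{-1}$ with $k=B^{2}/D^{2}$ --- no further integration is needed for $\lambda$, and the essential constant is identified as a ratio of the integration constants of the two structural conditions (the paper's $R^{2}$ is your $D^{2}/B^{2}$). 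Your recovery of $e^{\nu/2}=A-(D^{2}/B)\sqrt{1-kr^{2}}$ agrees with (\ref{enu27}), and on conformal flatness you are in fact more explicit than the paper, which simply asserts that the interior Schwarzschild metric satisfies (\ref{conformally}); your planned substitutions $(1-e^{\lambda})/r^{2}=-ke^{\lambda}$ and $\lambda'=2kre^{\lambda}$ are the right way to carry that verification out. Two small caveats, both shared with the paper: your route leans on (\ref{enu8}) and (\ref{enu16}) being the \emph{general} solutions of the respective second-order conditions (which the paper also assumes), and both arguments implicitly exclude the degenerate case $\nu'\equiv 0$ (i.e.\ $B=D=0$), in which both conditions hold vacuously and $\lambda$ is left undetermined.
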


\begin{proof}
    The matter distribution in the interior with vanishing complexity implies
\begin{equation}
    \nu^{'}\left[r\left(\lambda^{'}-\nu^{'}\right)+2\right]-2r\nu^{''}=0,
\end{equation}
\begin{equation} \label{eqn18}
    \therefore r \nu^{'} \lambda^{'}+2 \nu^{'}-r\left(2 \nu^{''}+{\nu^{'}}^{2}\right)=0.
\end{equation}
From Karmarkar condition (\ref{nu15}),
\begin{equation}
    2 \nu^{''}+{\nu^{'}}^{2}=\frac{\nu^{'}\lambda^{'}e^{\lambda}}{e^{\lambda}-1}.
\end{equation}
Substituting above in (\ref{eqn18})
\begin{equation}
    \left(r \lambda^{'}+2\right) \nu^{'}-\frac{r \nu^{'} \lambda^{'} e^{\lambda}}{e^{\lambda}-1}=0,
\end{equation}
\begin{equation}
    \therefore r \lambda^{'}+2-\frac{r\lambda^{'}e^{\lambda}}{e^{\lambda}-1}=0,
\end{equation}
\begin{equation}
    \therefore r \left[1-\frac{e^{\lambda}}{e^{\lambda}-1}\right] \lambda^{'}+2=0,
\end{equation}
\begin{equation}
    \therefore \frac{r \lambda^{'}}{e^{\lambda}-1}=2,
\end{equation}
\begin{equation}
    \therefore \frac{d\lambda}{e^{\lambda}-1}=\frac{2}{r}dr,
\end{equation}
integrating we get,
\begin{equation} \label{elambda}
    e^{\lambda}=\frac{1}{1-\frac{r^{2}}{R^{2}}}.
\end{equation}
Using (\ref{elambda}) in (\ref{enu16}) and integrating,
\begin{equation} \label{enu27}
    \therefore e^{\nu}=\left[C-D \sqrt{1-\frac{r^{2}}{R^{2}}}\right]^{2}.
\end{equation}
Substituting (\ref{elambda}) and (\ref{enu27}), the spacetime metric (\ref{metric1}) can be expresses as
\begin{equation}
    ds^{2}=\left[C-D \sqrt{1-\frac{r^{2}}{R^{2}}}\right]^{2} dt^{2}-\frac{1}{1-\frac{r^{2}}{R^{2}}} dr^{2}-r^{2}\left(d\theta^{2}+\sin^{2}\theta d\phi^{2}\right),
\end{equation}
which is a unique solution to Einstein's field equations named Schwarzschild interior solution. This solution satisfies condition (\ref{conformally}). Hence, it is conformally flat also it is well known that Schwarzschild interior spacetime metric is conformally flat.  
\end{proof}

\noindent Hence we proved the existence and uniqueness theorem for the solution of Einstein's field equations in the interior of a stellar configuration where matter has vanishing complexity, spacetime metric satisfies the Karmarkar condition and is conformally flat.

\section{Conclusion}
Many researchers have drawn attention to vanishing complexity and the Karmarkar condition in the literature. The models of stellar configuration with vanishing complexity are available in the literature; separately, models of stellar configuration satisfying the Karmarkar condition and the conformally flat condition are available in the literature. We searched for the solution of stellar configuration with vanishing complexity, in which the spacetime metric satisfies the Karmarkar condition and is conformally flat. It is found that the Schwarzschild interior solution satisfies the vanishing complexity condition and the Karmarkar condition simultaneously, which is also conformally flat. The Schwarzschild interior solution represents the matter with uniform density. Long ago, Bowers and Liang \cite{Bowers} studied the anisotropic sphere with constant density. Dev and Gleiser \cite{Dev2002} have studied the stability of anisotropic spheres with constant density. This study offers a unique solution to Einstein's field equations in the interior of a stellar configuration with vanishing complexity, where the spacetime metric satisfies both the Karmarkar condition and the conformally flat condition.

\section*{Acknowledgment}
BSR, BS, and VS would like to express their gratitude to IUCAA, Pune, for the hospitality and facilities that were provided to them during the completion of the work.
\vspace{1in}

\bibliographystyle{ws-mpla}
\bibliography{output}
\end{document}